\newtheorem{theorem}{Theorem}
\newtheorem{lemma}{Lemma}
\newtheorem{corollary}{Corollary}
\newtheorem{observation}{Observation}
\newcommand{\etal}{{et~al.}}
\newcommand{\ie}{{i.e.}}
\newcommand{\opt}{\textsf{OPT}}
\newcommand{\ZZ}{\mathbb{Z}} 
\newcommand{\RR}{\mathbb{R}} 
\newcommand{\eps}{\varepsilon}
\def\F{\mathcal F}
\def\I{\mathcal I}
\def\Q{\mathcal Q}
\newcommand{\Area}{\text{Area}}
\def\E{\mathbb{E}}
\def\den{\pi_T}
\def\denL{\pi_L}
\def\eps{\varepsilon}
\def\tri{\triangle}
\theoremstyle{definition}
\newcommand{\later}[1]{}
\newcommand{\old}[1]{}
\title{Lattice and Non-lattice Piercing of Axis-Parallel Rectangles}
\author{
Adrian Dumitrescu\thanks{%
Algoresearch L.L.C., Milwaukee, WI 53217, USA. 
Email~\texttt{ad.dumitrescu@algoresearch.org}.}\qquad
Arsenii Sagdeev\thanks{%
Karlsruhe Institute of Technology, Karlsruhe, Germany.
Email~\texttt{sagdeevarsenii@gmail.com}.}\qquad
Josef Tkadlec\thanks{%
Computer Science Institute, Faculty of Mathematics and Physics, Charles University, Prague, Czech Republic.
Email~\texttt{josef.tkadlec@iuuk.mff.cuni.cz}.}
}
\begin{document}

\maketitle

\begin{abstract}
Given a family $\F$ of shapes in the plane, we study what is the lowest possible density
of a point set $P$ that pierces (``intersects'', ``hits'') all translates of each shape in $\F$.
For instance, if $\F$ consists of two axis-parallel rectangles the best known piercing set,
\ie, one with the lowest density, is a lattice.

Given a finite family $\F$ of axis-parallel rectangles,
we present an algorithm for finding an optimal $\F$-piercing lattice.
The algorithm runs in time polynomial in the number of rectangles
and the maximum aspect ratio of the rectangles in the family.
No prior algorithms for this problem were known.

On the other hand, we show that for every $n \geq 3$, there exists a family of $n$ axis-parallel
rectangles for which the best piercing density achieved by a lattice is separated by
a positive (constant) gap from the optimal piercing density for the respective family.
Finally, we show that the best lattice can be sometimes worse by $20\%$ than the optimal piercing set.

\medskip
\textbf{\small Keywords}: axis-parallel rectangle, lattice piercing, piercing density, 
canonical vector basis, periodic tiling, exact algorithm, approximation algorithm.

\end{abstract}

\section{Introduction}  \label{sec:intro}

Consider the following problem: if a battleship of a given shape $S$ is hiding in a large sea,
how many bombs are needed to (simultaneously) drop to ensure that at least one bomb hits the battleship?
In other words, what is the smallest possible cardinality of a set of points that pierces each
congruent copy of $S$ that fits within the sea?
To avoid technicalities caused by the boundary effects, such questions are typically studied when
the sea is the whole plane,
and instead of minimizing the cardinality of the piercing set, one minimizes its density. 
The minimum possible density is called the \textit{piercing density} of $S$ and is denoted $\pi(S)$.
It is intimately related to the \textit{covering density} $\vartheta(S)$ of $S$ and is notoriously hard
to compute exactly, apart from several special cases, such as when $S$ is a disk~\cite[Ch.~1]{BMP05}.

Piercing and covering are ubiquitous topics in geometry, and variants of the above problem have
been studied extensively. Here we study the version in which the shape of the battleship is unknown,
but it is known that the shape is one of a few possible options.
Thus, our problem is finding a sparsest point set that pierces each copy of each of the possible shapes.
To keep the situation manageable, following~\cite{DT24,FS89} we consider ships that are rectangular
and axis-aligned, and we only allow translations.

Formally, given a family of sets $\E$ in the Euclidean space, a \emph{piercing set} is a set of points in $\RR^d$
collectively intersecting every set in $\E$. If $\E$ contains an infinite subfamily consisting of pairwise 
disjoint members then clearly every piercing set for $\E$ must be infinite as well. 

For a point set $P$ and a bounded domain $D$ with area $\Area(D)$,
we define the \textit{density of $P$ over $D$} by $\delta(P,D)=\frac{|P\cap D|}{\Area(D)}$.
The \textit{density of $P$} (over the whole plane) is then defined as $\delta(P)=\lim_{r\to\infty} \delta(P,D_r)$,
where $D_r$ is the disk of radius $r$ centered at the origin (if the limit exists).
In particular, the density of a point lattice is the reciprocal of the determinant $\det[u,v]$, where
$[u,v]$ is a vector basis of the lattice~\cite[p.~2]{BMP05}. 

Given a finite family $\F$ of (possibly disconnected) shapes in the plane,
what is the lowest possible \emph{density}
of a point set $P$ that pierces (``intersects'', ``hits'') all translates of each shape in~$\F$?
Regardless of what $\F$ is, such a set must be infinite.  
Here we consider this problem for families $\F$ of axis-parallel rectangles where the goal is finding
a point set of minimum density $\den(\F)$ that collectively pierces each \emph{translate} of a rectangle in $\F$
(the $_T$ in the notation $\den$ is for ``translates''). 
The problem was introduced in~\cite{DT24}, where an approximation algorithm and several other
results have been obtained for such families. 

If $\F$ consists of a single axis-parallel rectangle $R$, an optimal piercing set is a rectangular lattice
given by a tiling of the plane with copies of $R$ (\ie, a covering of the plane by interior-disjoint
translates of $R$).  If $\F$ consists of two axis-parallel rectangles, the answer is unknown!
The best piercing set currently known in this setting, \ie, one with the lowest density, is a lattice:
for certain families of two rectangles, the known lattices are provably optimal whereas for others,
the answer remains elusive~\cite{DT24}. As such, the question posed above appears as a basic open problem.

The starting point of this study is the observation that for certain families $\F$ of
axis-parallel rectangles, there exist sparse $\F$-piercing point sets that are not lattices.
For instance, for the family $\F=\{6 \times 1, 1 \times 6, 3 \times 3\}$ 
there exists a \emph{periodic} $\F$-piercing non-lattice set with the (optimal) density $1/6$;
see~\cref{fig:F0}.  Since the piercing set is  a subset of the integer lattice $\ZZ^2$,
the verification of the piercing property can be done visually and is left to the reader.
Building up on the above observation, we prove a separation result showing that 
a sparsest piercing lattice is sometimes significantly worse than a sparsest piercing point set.
This last step requires a careful setup (see properties P1--P3 below). 

\paragraph{Setup.}
Let $\F=\{R_1,\dots,R_n\}$ be a family of axis-parallel rectangles $R_i = w_i \times h_i$. 
Without loss of generality (after suitable scaling) we may assume that
(i)~the minimum rectangle width and height are both $1$, \ie, $\min w_i=1$ and $\min h_i=1$; and
(ii)~no input rectangle is contained in another. 
Let $k_x = \max\{w_i\}$, $k_y = \max\{h_i\}$, and $k = \max\{k_x,k_y\}$; $k$ is not necessarily integral.
Note that the standard integer lattice is $\F$-piercing.
An \emph{optimal $\F$-piercing lattice} is a sparsest among all $\F$-piercing lattices and its existence follows by the compactness of the set of $\F$-piercing lattices.
Note that an optimal $\F$-piercing lattice is not necessarily unique.
We first show that any optimal $\F$-piercing lattice $\Lambda$, after a possible reflection of the $x$-axis, has a \emph{canonical vector basis} $[u,v]$, i.e., the basis satisfying the following three properties our algorithms rely on:

\begin{enumerate} \itemsep 1pt
\item [P1:] $u$ is a shortest vector in $\Lambda$,
  $u$ has non-negative slope and $v$ has non-positive slope, \ie, $u=[a,b]$, $v=[c,-d]$ with
  $a,b,c,d \geq 0$.
\item [P2:] $1/2 \leq |u| \leq |v| \leq \sqrt{\frac73} \, \frac{A}{\lambda}$,
where $\lambda =|u|$ and $A=a\cdot d+b\cdot c$ is the area of a fundamental parallelogram of $\Lambda$.
\item [P3:] the angle $\alpha$ between $u$ and $v$ is between $45^\circ$ and $135^\circ$.
\end{enumerate} 
  
\paragraph{Outline.}
In Section~\ref{sec:lattice} we show that any optimal $\F$-piercing lattice, after a possible reflection of the $x$-axis, has a canonical vector basis.
For lattices satisfying P1--P3, in Section~\ref{sec:decision} we present a decision algorithm: given
a family $\F$ and a canonical vector basis the decision algorithm determines if the corresponding lattice
is $\F$-piercing. 
In Section~\ref{sec:alg} we present an algorithm for computing an optimal $\F$-piercing lattice.
The algorithm is based on the fact that candidate lattices must be tight (in a sense that will be made precise). 
In Section~\ref{sec:gap} we analyze lattice piercing versus non-lattice piercing
and demonstrate explicit instances for which an optimal piercing lattice is non-optimal
over all piercing sets.

\paragraph{Our results.}
The first result is a decision algorithm. Recall that $k$ is the maximum extent of a rectangle in $\F$, and that the minimum rectangle width and height are both 1.

\begin{theorem} \label{thm:decision}
  Given a family $\F=\{R_1,\dots,R_n\}$ consisting of $n$ axis-parallel rectangles,
 and a canonical vector basis $[u,v]$ of a lattice $\Lambda$, there is an algorithm that determines
 whether $\Lambda$ is an $\F$-piercing lattice in $O(k+n)$ time.
\end{theorem}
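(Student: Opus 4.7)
My plan is to decompose the $\F$-piercing question into $n$ independent piercing checks, one per rectangle, each reduced to a one-dimensional max-gap computation and answered via a preprocessed threshold function.

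For a single rectangle $R = w \times h$, observe that $\Lambda$ pierces all translates of $R$ iff for every $y_0 \in \RR$ the set $X(y_0) := \{p_x : p \in \Lambda,\ y_0 \le p_y \le y_0 + h\}$ has maximum consecutive gap at most $w$; otherwise one can place a translate of $R$ inside the horizontal strip $\RR \times [y_0, y_0+h]$ that avoids every lattice point. So I would define the threshold function $w^*(h) := \sup_{y_0}(\text{max gap of } X(y_0))$ and rephrase the piercing criterion as $w \geq w^*(h)$.

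With the canonical basis $u = [a, b]$, $v = [c, -d]$, the lattice points inside the strip correspond to $(i,j) \in \ZZ^2$ with $y_0 \leq ib - jd \leq y_0 + h$, and their $x$-coordinates are $ia + jc$. Sliding $y_0$ induces a rotation on the set of $x$-offsets modulo $a$ (say), and by the three-distance theorem the maximum gap of $X(y_0)$ takes at most three distinct values determined by the continued-fraction expansion of the relevant rotation angle (a ratio of $a, b, c, d$); the worst $y_0$ is the one with the minimum number of lattice points in the strip and can be identified in closed form. This lets one evaluate $w^*$ at a given $h$ in $\O(\log k)$ time.

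The function $w^*$ is non-increasing and piecewise constant, with a breakpoint each time the worst-case row count in the strip jumps by one. The canonical-basis bound $|u| \geq 1/2$ (P2) together with the angle constraint P3 forces the minimum vertical row spacing between lattice points to be at least a positive constant, so at most $\O(k)$ breakpoints lie in the relevant range $h \in [1, k]$. I would precompute all of them with their $w^*$-values in $\O(k \log k)$ time, amortizing the continued-fraction work across consecutive row counts, and store them in a sorted array. Each query $R_i = w_i \times h_i$ is then a binary search plus a single comparison, $\O(\log k)$; summing yields $\O((k+n)\log k)$.

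The hard part will be the Stern--Brocot / three-distance bookkeeping that proves $w^*$ has only $\O(k)$ breakpoints and that each can be evaluated in $\O(\log k)$ time; in particular one must track both the row count in the strip and the offset configuration on the circle $\RR / a\ZZ$ as the strip parameters change. Degenerate cases $a = 0$ or $c = 0$, in which the canonical basis contains a vertical primitive vector and the lattice is of rectangular type, need separate treatment, but there the entire analysis collapses to an elementary pair of one-dimensional piercing checks.
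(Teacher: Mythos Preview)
Your reduction to the threshold function $w^*(h)$ is sound and matches the paper's framing: $\Lambda$ is $\F$-piercing iff no maximal empty axis-parallel rectangle strictly contains some $R_i$, which is the same as $w_i \geq w^*(h_i)$ for each $i$. The paper, however, computes $w^*$ without any three-distance or continued-fraction machinery. It builds a \emph{funnel}: the set of lattice points $p$ in the board $D=[0,k]\times[-k,k]$ for which the rectangle with diagonal $op$ is empty of other lattice points. The upper and lower funnels are monotone staircases; since consecutive staircase points are at Euclidean distance at least $\lambda \ge 1/2$, each funnel has $O(k)$ points. Each funnel point is located by a binary search along one of $O(k)$ parallel $u$-lines, giving $O(k\log k)$ preprocessing, after which every $R_i$ is tested by a single binary search into the funnel.

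Your route has a genuine gap at the step that bounds the number of breakpoints. The claim that P2 and P3 force ``the minimum vertical row spacing between lattice points'' to be a positive constant is false: whenever $b/d$ is irrational the $y$-coordinates $ib-jd$ of lattice points are dense in $\RR$, so the infimum of vertical gaps is zero, and nothing in P1--P3 excludes this. The correct reason $w^*$ has only $O(k)$ breakpoints on $[1,k]$ is the staircase count above, which uses the lower bound on \emph{Euclidean} interpoint distance, not on vertical gaps. The three-distance invocation is similarly unjustified: that theorem concerns a single rotation orbit $\{n\alpha \bmod 1\}$, whereas $X(y_0)$ is a union of truncated arithmetic progressions, one per $u$-line meeting the strip, and you have not shown it collapses to a single orbit or that sliding $y_0$ acts as a rotation on it. A Stern--Brocot description of the funnel may well exist, but establishing it would require a different argument than the one sketched here.
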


The existence of a decision algorithm allows us to obtain an algorithm for computing an optimal piercing lattice.
The algorithm employs the decision algorithm in~\cref{thm:decision} as a subroutine and
computes an optimal piercing lattice by generating and solving linear systems with two variables.

\begin{theorem} \label{thm:alg}
  Given a family $\F=\{R_1,\dots,R_n\}$ consisting of $n$ axis-parallel rectangles,
  an optimal $\F$-piercing lattice can be found in $O\left( k^8 n^4 (k + n) \right)$ time. 
\end{theorem}

In the analysis of lattice piercing versus non-lattice piercing, 
the first distinction is offered by the following. While the resulting separation is small,
its proof is nontrivial~\cite[Section~7]{DT22x}.

\begin{theorem}\label{thm:eps-gap}
There exists a family $\F_0$ of $3$ axis-parallel rectangles and a positive constant $\eps>0$
where the best piercing density achieved by a lattice is at least $(1+\eps) \, \pi_T(\F_0)$. 
\end{theorem}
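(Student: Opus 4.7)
The plan is to use the family $\F_0 = \{6\times 1,\, 1\times 6,\, 3\times 3\}$ already highlighted in the introduction, for which \cref{fig:F0} exhibits a periodic non-lattice piercing set of density $1/6$; this immediately yields the upper bound $\pi(\F_0) \leq 1/6$ after a routine verification on one period of the displayed $\ZZ^2$-subset. The real task is the matching lower bound $\pi_L(\F_0) \geq 1/6 + \eps$ for some positive $\eps$. I would split this into two stages: a compactness reduction showing that a quantitative gap follows from a \emph{strict} separation at density exactly $1/6$; and the actual ruling out of any $\F_0$-piercing lattice of density $1/6$.

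For the compactness reduction, properties P1--P3 confine the canonical basis of any lattice of density $\geq 1/6$ to a bounded subset of $\RR^4$: we have $A \leq 6$, $\lambda = |u| \geq 1/2$, and $|v| \leq \sqrt{7/3}\,A/\lambda$, so all four coordinates $a,b,c,d$ are bounded. The admissible region is also closed, and the $\F_0$-piercing property is closed on it: if $\Lambda_n \to \Lambda$ are $\F_0$-piercing, then for each translate $R_i + t$ the witnesses $p_n \in \Lambda_n \cap (R_i + t)$ admit a subsequential limit $p \in \Lambda \cap (R_i + t)$ by compactness of $R_i + t$. Hence the set of canonical bases of $\F_0$-piercing lattices with density $\geq 1/6$ is compact; if no element of density exactly $1/6$ pierces $\F_0$, the density function attains its minimum on this compact set strictly above $1/6$, which produces the desired $\eps$.

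The main obstacle is therefore the strict separation at $A = ad+bc = 6$. Fix a canonical basis $[u,v] = ([a,b],[c,-d])$ of an alleged $\F_0$-piercing lattice $\Lambda$ with $A = 6$. Each $R_i \in \F_0$ yields a covering condition $\Lambda + R_i \supseteq \RR^2$, which reduces to covering a single fundamental parallelogram by translates of $R_i$. I would exploit the $90^\circ$-rotational symmetry of $\F_0$ (which swaps $6\times 1$ with $1\times 6$ while preserving $3\times 3$) to restrict attention to a low-dimensional slice of canonical bases, then analyze the tension between the three constraints: the $3\times 3$ condition forces the parallelogram to be roughly balanced in both axis directions, whereas the $6\times 1$ and $1\times 6$ conditions demand enough horizontal and vertical span to reach length $6$. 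With $A = 6$ fixed these should be mutually infeasible; I would verify this via a direct piecewise-linear case analysis in $(a,b,c,d)$, or equivalently by invoking the decision procedure of \cref{thm:decision} on representatives of each case, handling the finitely many boundary candidates by hand. The delicate point is that at $A = 6$ several piercing inequalities become tight simultaneously, so the analysis must exclude borderline lattices rather than merely generic ones.
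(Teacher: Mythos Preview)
Your compactness reduction is a legitimate strategy, and genuinely different from the paper's proof. The paper does not reduce to $A=6$; it works directly with any lattice satisfying $6-\delta \le A \le 6$ for an explicit $\delta=10^{-5}$, uses an averaging argument over a large $1800k\times 1800k$ square to locate a $6\times 6$ cell containing exactly six lattice points with nearly unit-spaced $x$- and $y$-coordinates, deduces that those six points sit near a permutation pattern in $\{0,\dots,5\}^2$, and hence that $\lambda$ is close to $\sqrt{2}$ or $\sqrt{5}$; each alternative is then dispatched by exhibiting an unpierced $3\times 3$ square or a forced coordinate collision. That route produces an explicit $\eps=3\cdot 10^{-7}$, whereas compactness would only give existence of some $\eps>0$. (One minor repair your reduction needs: the bound $\lambda\ge 1/2$ in P2 is established in Lemma~\ref{lem:uv} only for \emph{optimal} $\F$-piercing lattices, so you cannot invoke it for an arbitrary $\F_0$-piercing lattice with $A$ near $6$; instead use $\mu\le 3\sqrt{2}$, forced by the $3\times 3$ constraint, to get $\lambda=A/\mu>1$ once $A\ge 5$.)

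The real gap is your second stage: you have not actually excluded any lattice with $A=6$, only proposed ``a direct piecewise-linear case analysis'' or ``invoking the decision procedure of \cref{thm:decision}.'' Neither is a proof as written. The locus $ad+bc=6$ is a three-parameter family of lattices, and the decision procedure tests one lattice at a time; to make your plan work you would first need a finiteness reduction---for instance the tightness argument of Lemma~\ref{lem:tight}, which pins $(a,c)$ and $(b,d)$ down as solutions of $2\times 2$ integer linear systems---to cut the candidates to a finite list that can then be checked. You neither state nor prove such a reduction, and the vague appeal to the $90^\circ$ symmetry of $\F_0$ does not supply one. Without this step the heart of the theorem is missing. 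The paper does carry out exactly that finiteness-plus-enumeration program later (Section~\ref{sec:gap-F0}) to obtain the sharper constant $36/31$, but the proof of \cref{thm:eps-gap} itself bypasses it via the combinatorial argument sketched above.
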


The proof of~\cref{thm:eps-gap} yields a value $\eps=3 \cdot 10^{-7}$. 
We manage to amplify the separation result by roughly six orders of magnitude
via computer-assisted proofs (Section~\ref{sec:gap}).

\begin{theorem}\label{thm:gap-F0-F1}
There exist families $\F_0,\F_1$ of $3$ (resp. $5$) axis-parallel rectangles for which the best piercing densities
achieved by a lattice are at least $\frac{36}{31} \pi_T(\F_0)$ (resp. $\frac{6}{5} \, \pi_T(\F_1) $).  
\end{theorem}

Theorems~\ref{thm:eps-gap} and~\ref{thm:gap-F0-F1} can be extended to any larger number of rectangles.
Consequently, we obtain the following general separation result.

\begin{corollary} \label{cor:gap}
For every $n \geq 5$,
there exists a family $\F$ of $n$ axis-parallel rectangles for which the best piercing density
achieved by a lattice is at least $\frac{6}{5} \, \pi_T(\F) $. 
\end{corollary}

To put Corollary~\ref{cor:gap} into perspective we highlight the following approximation result
used as a subroutine in our exact algorithm for finding an optimal piercing lattice. 

\begin{theorem} {\rm \cite{DT24}} \label{thm:approx}
Given a family $\F=\{R_1,\dots,R_n\}$ consisting of $n$ axis-parallel rectangles,
a $1.895$-approximation of $\den(\F)$ can be computed in $O(n)$ time.
The output piercing set is a lattice with density at most
$(1+\frac25\sqrt 5)\cdot\den(\F) = (1.894\ldots) \cdot \den(\F)$.
\end{theorem}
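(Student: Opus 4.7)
The plan is to (i) establish a simple lower bound on $\den(\F)$, (ii) construct an explicit piercing lattice determined by a few scalar summaries of $\F$, (iii) verify its piercing property, and (iv) compare the lattice density to the lower bound.

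For (i), I would use an area packing argument: for each $R_i\in\F$, translates of $R_i$ admit a disjoint packing of the plane with area-density $1$, so any piercing set $P$ satisfies $\den(P)\ge 1/(w_ih_i)$. Taking the minimum over $i$ yields $\den(\F)\ge 1/A^*$, where $A^*=\min_i w_ih_i$, realized by some area-minimizer $R^*=w^*\times h^*$.

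For (ii), using the normalization $\min w_i=\min h_i=1$, I would take a slanted lattice $\Lambda$ with basis $[\alpha,0]$, $[\beta,\gamma]$, where $(\alpha,\beta,\gamma)$ is a closed-form function of $w^*$ and $h^*$ (and possibly the largest width and height in $\F$, each found in a single linear scan). The parameters are chosen so that no maximal empty axis-parallel rectangle of $\Lambda$ has both width $\ge 1$ and height $\ge 1$; this is exactly the geometric condition needed for (iii), since every $R_i$ in the normalized family has $w_i\ge 1$ and $h_i\ge 1$.

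For (iv), the density of $\Lambda$ is $1/(\alpha\gamma)$, so the approximation ratio is $A^*/(\alpha\gamma)$. Bounding this in the worst case reduces to a small one- or two-parameter extremal problem in the shear parameter $\beta$ together with the aspect ratio $w^*/h^*$. The sharp constant $1+\frac25\sqrt5=(5+2\sqrt5)/5$ should emerge from the worst-case configuration, in which the empty-rectangle constraint becomes tight along a pair of incomparable rectangles whose aspect ratios are $\sqrt5$-related (note that $\sqrt5$ is the diagonal of a $1\times 2$ rectangle, hinting at the extremal geometry). The total running time is $\O(n)$: one linear scan to collect the extremal dimensions, then constant-time evaluation of $\alpha,\beta,\gamma$ and output of the basis.

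The main obstacle will be the extremal calculation in step (iv), together with proving that a \emph{single} choice of shear $\beta$ suffices uniformly over the family rather than requiring per-rectangle tuning. A secondary subtlety is the piercing verification (iii), which I expect to reduce to analyzing a short list of candidate empty rectangles anchored at neighboring lattice points of $\Lambda$; once one shows that the widest such empty rectangle of height $\ge 1$ has width $<1$ (and symmetrically), piercing for the whole normalized family follows.
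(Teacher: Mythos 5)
A point of reference first: Theorem~\ref{thm:approx} is not proved in this paper at all --- it is imported from~\cite{DT21a} and used as a black box --- so your proposal can only be measured against the construction in that reference, whose general skeleton (area lower bound, an explicit lattice built from a scalar summary of the normalized family, a worst-case ratio computation) your plan does resemble. Your step (i) is correct and standard: packing the plane with translates of the minimum-area rectangle gives $\den(\F)\ge 1/A_{\min}$, where $A_{\min}=\min_i w_ih_i$.

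However, step (iii) as you state it is a genuine error, not merely an unfinished computation. You require that no maximal empty rectangle of $\Lambda$ have both width $\ge 1$ and height $\ge 1$. That condition says that $\Lambda$ pierces every translate of the $1\times 1$ square, which forces the fundamental cell area of $\Lambda$ to be at most $1$ (Observation~\ref{obs:min-area} applied to the singleton family $\{1\times 1\}$), i.e., the density of your lattice is at least $1$. After normalization, $\den(\F)$ can be as small as roughly $1/A_{\min}$ with $A_{\min}$ arbitrarily large --- already for $\F_0=\{6\times1,\,1\times6,\,3\times3\}$ from this paper one has $\den(\F_0)=1/6$ --- so a lattice satisfying your criterion is only an $A_{\min}$-approximation, and the constant $1+\frac25\sqrt 5$ cannot emerge from it. The missing idea is an area-threshold relaxation: every normalized $R_i$ (width and height $\ge 1$, area $\ge A_{\min}$) contains a sub-rectangle of width $\ge 1$, height $\ge 1$ and area exactly $A_{\min}$, so it suffices that no maximal empty rectangle of $\Lambda$ have width $\ge1$, height $\ge1$ \emph{and} area $\ge A_{\min}$. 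The lattice should then be built from the single parameter $A_{\min}$ (not from the minimizer's particular $w^*,h^*$, nor from the largest width and height), which also removes your worry about a uniform shear: the extremal computation becomes a one-parameter problem in $A_{\min}$, whose worst case yields $1+\frac{2}{\sqrt5}=1+\frac25\sqrt5$. Even granting this correction, your write-up specifies neither the lattice basis nor the ratio calculation, so as it stands it is an outline rather than a proof.
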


Given a point set $P$, a rectangle is said to be \emph{empty} if it contains no points of $P$
in its interior. Let $\F$ be a finite family of axis-parallel rectangles. 
The problem of determining whether a given (infinite) point set $P$ is $\F$-piercing
is closely related to that of determining all \emph{maximal} empty rectangles  amidst the points in~$P$.
As such, a maximal empty rectangle has points of $P$ on its boundary.
A point set $P$ is $\F$-piercing if and only if no maximal empty rectangle $Q$ determined by $P$
strictly contains a translate of a rectangle in $\F$ in its interior (see Lemma~\ref{lem:iff} in \Cref{sec:decision}).

\paragraph{Related work.}
The number of maximal empty rectangles amidst $n$ points is $O(n^2)$, and this bound is tight~\cite{NLH84}.
Given an axis-parallel rectangle $R$ in the plane containing $n$ points, the problem of computing
a maximum-area empty axis-parallel sub-rectangle contained in $R$ has been studied extensively.
Recently, Chan~\cite{Ch21} gave an algorithm running in nearly $O(n \log{n})$ time for this problem.

Let $\F$ be a finite family of axis-parallel rectangles. 
The \emph{piercing number} of $\F$, denoted $\tau(\F)$, is the minimum cardinality of an $\F$-piercing set.
The \emph{independence number} or {\em matching number} of $\F$,
namely the maximum number of pairwise disjoint sets in $\F$,
is denoted by $\alpha(\F)$ or $\nu(\F)$. Clearly, $\nu(\F) \le \tau(\F)$. 
The main unsettled question here is whether $\tau(\F) =O(\nu(\F))$.
The best known upper bound is due to Correa~\etal~\cite{CFPS15}; the above inequality has
been confirmed in some special cases~\cite{CSZ18}.
For axis-parallel boxes in $\RR^d$, Tomon~\cite{To23} has recently shown that the ratio $\tau/\nu$ 
can be arbitrarily large for $d \geq 3$. 

Given a set $S$ of $n$ points in the unit square $U=[0,1]^2$, 
let $A(S)$ be the maximum area of an empty axis-parallel rectangle contained in $U$
(also known as the \emph{dispersion} of $S$), and let $A(n)$
be the minimum value of $A(S)$ over all sets $S$ of $n$ points in $U$.
It is known that $1.504 \leq \lim_{n \to \infty} n A(n) \leq 1.895$. The lower bound
is a recent result of Bukh and Chao~\cite{BC22}
and the upper bound is another recent result of Kritzinger and Wiart~\cite{KW21}.
See also~\cite{AHR17,Kr18,LW23,UV18}.

In the discrete version, the family $\F$ consists of finite (possibly disconnected) subsets of $\ZZ^2$,
and one is interested in the lowest density of an infinite subset of $\ZZ^2$ that
intersects each translate of each element of $\F$ (by an integer vector)~\cite{HLP+22}.
For singleton families in one dimension, \ie, with $\F=\{S\}$  and $d=1$,
the optimal density can be computed in $\exp{(O(k))}$ time, see~\cite{BJR11}.
It is clear that if $|S|=k$, $k \in \{1,2\}$,
then $\pi_T(S)=1/k$. Newman~\cite{New67} showed that $\pi_T(S) \leq 2/5$ for any $S \subset \ZZ$ of size $3$
and this bound is attained if $S=\{0,1,3\}$. Moreover, he showed that $\pi_T(S) \leq (1+o(1)) \frac{\log{k}}{k}$
for any $S \subset \ZZ$ of size $k$, where $k \to \infty$.
 Recently,  Frankl~\etal~\cite{FKS23} confirmed certain explicit
expressions for the piercing densities of $3$-element point sets conjectured by Schmidt and Tuller around 2010. 
Axenovich~\etal~\cite{AGL+19} recently showed via a computer-aided proof that $\pi_T(S) \leq 1/3$ for any
$S \subset \ZZ$ of size $4$, confirming a conjecture of Newman~\cite{New67}.
This bound is attained for $S=\{0, 1, 2, 4\}$.

\subsection{Non-lattice periodic piercing}  \label{sec:non-lattice}

Figure~\ref{fig:F0} shows a non-lattice periodic grid point-set that pierces all translates
of rectangles in $\F_0=\{6 \times 1, 1 \times 6, 3 \times 3\}$. There are six points in a $6 \times 6$ square tile
whose translates can tile the whole plane. This piercing set yields an area per point equal to $36/6=6$,
so its piercing density is $1/6$. Since the minimum area of a rectangle in $\F$ is $6$, this piercing set
has an optimal density (by averaging over a region of large area). 

\begin{figure}[ht]
\centering
\includegraphics[scale=1]{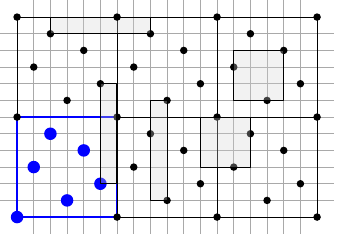}
\caption{This non-lattice periodic grid point-set pierces all translates
  of $\F_0=\{6 \times 1, 1 \times 6, 3 \times 3\}$.} 
\label{fig:F0}
\end{figure}

\section{Lattice piercing}  \label{sec:lattice}

Let $\F$ be a finite family of axis-parallel rectangles; denote by $A_{\min}$
the minimum area of a rectangle in $\F$.
Let $\Lambda$ be an optimal $\F$-piercing lattice and $A>0$ be the area of its fundamental parallelogram. 

\begin{observation} \label{obs:min-area}
$A \leq A_{\min}$. In particular, $A \leq k$.
\end{observation}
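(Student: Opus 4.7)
The plan is to use a standard duality between piercing and covering. Let $R_{\min} \in \F$ be a rectangle of minimum area $A_{\min}$. Since $\Lambda$ is $\F$-piercing, in particular every translate of $R_{\min}$ must contain a point of $\Lambda$.

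First I would reformulate the piercing condition as a covering condition. A translate $R_{\min}+t$ contains a lattice point $\lambda \in \Lambda$ iff $t \in \lambda - R_{\min}$. Hence the assertion that every translate of $R_{\min}$ is pierced is equivalent to
\[
\bigcup_{\lambda \in \Lambda}\bigl(\lambda - R_{\min}\bigr) \;=\; \RR^2.
\]
Because $R_{\min}$ is an axis-parallel rectangle, each set $\lambda - R_{\min}$ is just a translate of $R_{\min}$ (up to the central symmetry of the rectangle), so the collection $\{\lambda - R_{\min} : \lambda \in \Lambda\}$ forms a lattice covering of the plane by congruent rectangles.

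Next I would compare areas via a fundamental-domain argument. Let $P$ be a fundamental parallelogram of $\Lambda$ with area $A$. For each $\lambda \in \Lambda$, the translate $\lambda - R_{\min}$ intersects $P$ in a set of area at most $\Area((\lambda-R_{\min})\cap P)$, and summing over all $\lambda$ the translates $(\lambda - R_{\min}) \cap P$ cover $P$. On the other hand, because the translates $\{P + \lambda : \lambda \in \Lambda\}$ tile $\RR^2$, we have
\[
\sum_{\lambda \in \Lambda} \Area\bigl((\lambda - R_{\min})\cap P\bigr) \;=\; \Area(R_{\min}) \;=\; A_{\min}.
\]
Combining with the covering inequality $\sum_\lambda \Area((\lambda - R_{\min})\cap P) \geq \Area(P) = A$ yields $A_{\min} \geq A$, as desired.

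No step looks like a real obstacle: the only thing to be slightly careful about is writing down the area identity above (that translates of $R_{\min}$ aggregated modulo the lattice contribute total area equal to $A_{\min}$), but this is just the standard observation that any measurable set of area $A_{\min}$, when chopped and reassembled into a fundamental domain, preserves total area.
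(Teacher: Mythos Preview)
Your argument is correct. The paper does not actually supply a proof of this observation---it is stated and then immediately used---so there is nothing to compare against at the level of technique. Your route via the piercing--covering duality (the translates $\lambda - R_{\min}$ cover $\RR^2$, hence summing areas over a fundamental domain gives $A \le A_{\min}$) is a standard and complete justification; the one cosmetic point is that the equality $\sum_{\lambda} \Area\bigl((\lambda - R_{\min}) \cap P\bigr) = A_{\min}$ follows by translating each summand by $-\lambda$ and using that $\{P-\lambda\}_{\lambda\in\Lambda}$ tiles the plane, exactly as you indicate.
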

\begin{proof}
  The first inequality is clear. For the second, recall that there is a rectangle $1 \times h_i$
  in $\F$, where $h_i \leq k$. 
\end{proof}

Two parameters of interest in a lattice are (see also~\cite[Ch.~4]{ES03}):
\begin{itemize} \itemsep 0pt
\item the smallest interpoint distance $\lambda$
\item the distance $\mu$ between two consecutive lattice lines of direction $u$,
  with  interpoint distance $\lambda$; these lines are also referred to as $u$-lines.
\end{itemize}
Note that $A=\lambda \mu$.
We start with a lemma of independent interest spelling out two inequalities between these parameters.
The second can be found in~\cite{PA95} (as Thm.~1.3).  Nevertheless, here we give our own proof.

\begin{lemma} \label{lem:lambda}
The following inequalities hold:
{\rm (i)}~$\lambda \leq \frac{2}{\sqrt3} \mu$ and
{\rm (ii)}~$\lambda \leq \sqrt{\frac{2 A}{\sqrt3}}$.
\end{lemma}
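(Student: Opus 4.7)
The plan is to prove inequality (i) first and then derive inequality (ii) as an immediate corollary.

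For (i), I will pick a shortest vector $u$ with $|u|=\lambda$ and a basis vector $v$ so that $[u,v]$ is a basis of $\Lambda$. Writing $v = tu + w$ with $w \perp u$, the key observation is that $|w|=\mu$: indeed, $v$ lies on a $u$-line adjacent to the one through the origin, and the distance between consecutive $u$-lines is exactly $\mu$ by definition. By replacing $v$ with $v - \lfloor t + 1/2 \rfloor u$ (which still yields a basis together with $u$), we may assume the coefficient $t$ along $u$ satisfies $|t| \leq 1/2$. Since $v$ is a nonzero lattice vector and $u$ is a \emph{shortest} vector, $|v|^2 \geq \lambda^2$, giving
\[
\lambda^2 \leq |v|^2 = t^2\lambda^2 + \mu^2 \leq \tfrac{1}{4}\lambda^2 + \mu^2,
\]
which rearranges to $\tfrac{3}{4}\lambda^2 \leq \mu^2$, i.e., $\lambda \leq \tfrac{2}{\sqrt{3}}\mu$.

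For (ii), use $A = \lambda\mu$ together with (i). Multiplying the inequality $\lambda \leq \tfrac{2}{\sqrt{3}}\mu$ by $\lambda$ gives $\lambda^2 \leq \tfrac{2}{\sqrt{3}}\lambda\mu = \tfrac{2A}{\sqrt{3}}$, and taking square roots yields the claim.

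The only subtlety in the argument is justifying that the component of $v$ perpendicular to $u$ has length exactly $\mu$ rather than some integer multiple of $\mu$. This is guaranteed by choosing $v$ to be a shortest lattice vector not parallel to $u$; such a $v$ must lie on one of the two $u$-lines immediately adjacent to the line through the origin, since otherwise subtracting a suitable lattice vector from a closer $u$-line would produce a shorter representative. Then $[u,v]$ spans a sublattice whose fundamental parallelogram has area $\lambda\mu$, equal to the area of a fundamental parallelogram of $\Lambda$, so $[u,v]$ is in fact a basis of $\Lambda$. This is the only step that requires a bit of care; the rest of the proof is the short computation above.
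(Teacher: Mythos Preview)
Your proof is correct and follows essentially the same approach as the paper: both arguments locate a lattice point on the $u$-line adjacent to the origin whose offset along the $u$-direction is at most $\lambda/2$, then compare its length to $\lambda$ via Pythagoras to obtain $\lambda^2 \le \lambda^2/4 + \mu^2$. The paper phrases this geometrically (picking a nearest point $c$ on the next line and ``translating'' it to the midpoint), whereas you phrase it as Gauss--Lagrange basis reduction ($v \mapsto v - \lfloor t+1/2\rfloor u$); part (ii) is derived identically in both.
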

\begin{proof}
Consider two consecutive parallel $u$-lines $\ell_1, \ell_2$ at distance $\mu$ from each other.
Refer to \cref{fig:lambda}.

\begin{figure}[ht]
\centering
\includegraphics[scale=0.7]{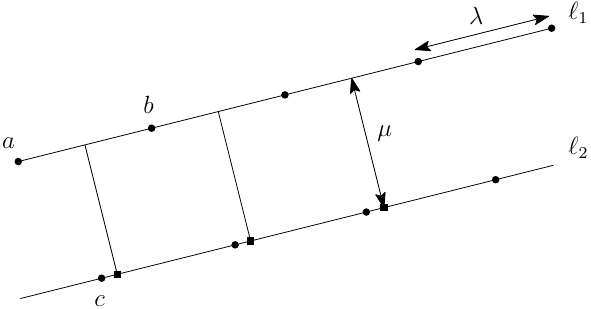}
\caption{Two consecutive parallel $u$-lines.}
\label{fig:lambda}
\end{figure}

Let $a,b \in \Lambda \cap \ell_1$ be consecutive lattice points on $\ell_1$ and
$c \in \Lambda \cap \ell_2$ be a lattice point whose orthogonal projection onto $\ell_1$
lies between $a$ and $b$, and say, $|ca| \leq |cb|$. Let $c'$ be a point on $\ell_2$ whose orthogonal projection onto $\ell_1$ is the midpoint of $ab$.
Observe that $\lambda \leq |ca| \leq |c'a| = \sqrt{\lambda^2/4 + \mu^2}$,
or $\lambda \leq \frac{2 \mu}{\sqrt3}$, proving the first inequality.
Taking into account that $A=\lambda \mu$, we deduce
$ \lambda^2 \leq \frac{2 \lambda \mu}{\sqrt3} = \frac{2A}{\sqrt3}$, 
or $\lambda \leq \sqrt{\frac{2 A}{\sqrt3}}$, proving the second inequality.
\end{proof}

\paragraph{Canonical vector basis.}
We show that an optimal $\F$-piercing lattice $\Lambda$ has a canonical vector basis $[u,v]$, \ie, one satisfying properties P1--P3 listed in Section~\ref{sec:intro}. 
Let $u$ be a shortest vector in $\Lambda$ ($|u|=\lambda$).
We may assume without loss of generality (this may require a reflection of the $x$-axis)
that $u$-lines have non-negative slope. 
Let $\ell$ be a $u$-line through the origin $o$ and $\ell'$ be the next $u$-line below $\ell$.
Refer to Fig.~\ref{fig:angle}. Observe that $[u,v]$ is a basis of $\Lambda$ for every $v = oc$, $c \in \Lambda \cap \ell'$.
Our next lemma shows that $v$ satisfies the desired properties P1--P3 for a suitable choice of $c$.

\begin{lemma} \label{lem:angle}
  There exists $c \in \Lambda \cap \ell'$ such that $v=oc$ has non-positive slope, the angle $\alpha$ made by $u$ and $v$ satisfies
  $45^\circ \leq \alpha \leq 135^\circ$, and $|v| \leq \sqrt{\frac73} \, \frac{A}{\lambda}$.
\end{lemma}
\begin{proof}
	
  The easy case when $\ell$ and $\ell'$ are horizontal or vertical is omitted from the proof. We will find the desired point $c$ in the fourth quadrant:
  since $\mu \geq \frac{\sqrt3}{2} \lambda$ (by Lemma~\ref{lem:lambda}), it is easily seen that there is at least one lattice point in the fourth quadrant on $\ell'$.
Let $a$ be the endpoint of $u$ and let $p$ denote the orthogonal projection of $o$ onto $\ell'$. We distinguish two cases:

\smallskip
\emph{Case 1: the slope of $\ell$ is at most $1$.} See Fig.~\ref{fig:angle}\,(left).
Let $c$ be the closest lattice point left of $p$ on $\ell'$, if there is one in the fourth quadrant,
or the first one right of $p$, otherwise. 
By the assumption on slope we have $\angle{aoc} \leq 135^\circ$, as required.
If $c$ lies left of $p$, then clearly we have $\angle{aoc} \geq \angle{aop} = 90^\circ$. 
If $c$ lies right of $p$, it must be the first point on $\ell'$ in the fourth quadrant. 
Since $ca$ has positive slope, we have $\angle{oac} \leq 90^\circ$, whence
$\angle{aoc} + \angle{oca} \geq 90^\circ$.
In $\tri oac$ the side $oa$ is the shortest one, hence the corresponding angle $\angle{oca}$ is the smallest one.
In particular $\angle{oca} \leq \angle{aoc}$ hence $\angle{aoc} \geq 45^\circ$.
Finally, recall that $A=\lambda \mu$ and that $\lambda \leq \frac{2 \mu}{\sqrt3}$ (by the first inequality of Lemma~\ref{lem:lambda}).
Thus, we have
\[ |v| = |oc| =\sqrt{\mu^2 + |pc|^2} \leq \sqrt{\mu^2 + \lambda^2} \leq \sqrt{\mu^2 + \frac{4\mu^2}{3}} = \sqrt{\frac73} \, \mu
= \sqrt{\frac73} \, \frac{A}{\lambda}. \]

\begin{figure}[ht]
	\centering
	\includegraphics[scale=0.7]{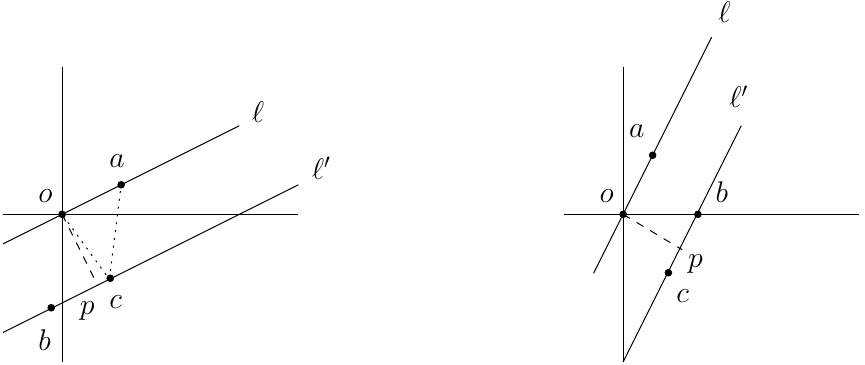}
	\caption{The angle between $u$ and $v$ is mid-range.
	Case 1 (left) and Case 2 (right).}
	\label{fig:angle}
\end{figure}

\smallskip
\emph{Case 2: the slope of $\ell$ is at least $1$.} 
Let $c$ be the closest lattice point right of $p$ on $\ell'$, if there is one in the fourth quadrant, or the closest one left of $p$, otherwise.
By the assumption on slope we have $\angle{aoc} \geq 45^\circ$, as required. If $c$ lies right of $p$, then clearly we have
$\angle{aoc} \leq \angle{aop} = 90^\circ$. If $c$ lies left of $p$, it must be the last point on $\ell'$ in the fourth quadrant.
Let $b$ be the closest lattice point right of $p$ on $\ell'$, which is in the first quadrant.
Since $ob$ has positive slope, we have $\angle{aob} \leq 90^\circ$, whence
$\angle{oab} + \angle{oba} \geq 90^\circ$.
In $\tri oab$ the side $oa$ is the shortest one, hence the corresponding angle $\angle{oba}$ is the smallest one.
In particular $\angle{oba} \leq \angle{oab}$ hence $\angle{oab} \geq 45^\circ$ or $\angle{aoc} \leq 135^\circ$ (as complementary angles).
Finally, the inequality $|v| \leq \sqrt{\frac73} \, \frac{A}{\lambda}$ follows as before.
\end{proof}

To verify the optimality of the interval $45^\circ \le \alpha \le 135^\circ$ from \Cref{lem:angle}, consider two lattices obtained from $\ZZ^2$
by a small perturbation as in Fig.~\ref{fig:angles-45-135}.

\begin{figure}[ht]
\centering
\includegraphics[scale=0.65]{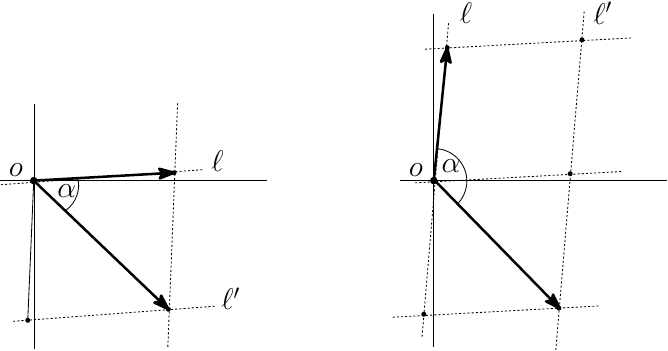}
\caption{Tight cases of optimality.}
\label{fig:angles-45-135}
\end{figure}

The property that $\alpha$ is not too far from $90^\circ$
is key in bounding from above the complexity (and length) of a path connecting any two points
in the lattice along directions $u$ and $v$. (See also~\cite[Ch.~27]{Va01} for a similar angle property
of the shortest vector derived from different principles.)
Another key property for this purpose is that $|u| \geq 1/2$, which holds by the following lemma.

\begin{lemma} \label{lem:uv}
  If $\Lambda$ is an optimal $\F$-piercing lattice, then $|u|\ge 1/2$.
\end{lemma}
\begin{proof}
  Assume for contradiction that $|u| < 1/2$. 
  Recall that each extent of a rectangle in $\F$ is at least $1$, and so the integer lattice $\ZZ^2$
  is $\F$-piercing and thus $A=\lambda \mu \geq 1$. Since $\lambda = |u| < 1/2$, we have  $\mu>2$.
  Pick a positive $\varepsilon$ such that $(2+2\varepsilon)\lambda < 1$ and consider a new lattice $\Lambda'$ with basis $[u',v'] = [(2+2\varepsilon)u,v/2]$. 
  Observe that
  \[ \lambda' = (2+2\varepsilon)\lambda <1, \ \mu'= \mu/2 >1, \text{ and } A'= (1+\varepsilon)\lambda\mu > A. \]
  Next, we show that $\Lambda'$ is also $\F$-piercing and thus $\Lambda$ is not optimal, a desired contradiction.
  
  Let $R$ be an arbitrary translate of some $R_i \in \F$. Our goal is to show that $R$ is pierced by $\Lambda'$.
  Since $\Lambda$ is $\F$-piercing, $R$ is intersected by some $u$-line, say, $\ell_1$.
  Observe that if $|\hat{\ell} \cap R| \geq 1 > \lambda'$ for some $u'$-line $\hat{\ell}$ of $\Lambda'$, then $R$ is pierced by a lattice point
  on $\hat{\ell}$, as desired. So in what follows, we assume that $|\hat{\ell} \cap R|<1$ for each $u'$-line $\hat{\ell}$.
  In particular, no $u'$-line intersects $R$ in two opposite sides. Thus, we can assume without loss of generality that
  $\ell_1$  intersects the left and top side of $R$. In particular, $\ell_1$ is not axis-parallel.
  
  Let $0<\theta<90^\circ$ denote the angle made by a $u$-line $\ell_1$ with the $x$-axis. 
  Let $\ell_2$ be the next $u$-line of $\Lambda$ below it, and $\ell'$ be the $u'$-line of $\Lambda'$ between $\ell_1$ and $\ell_2$. See Fig.~\ref{fig:halve}.
  Let $y$ denote the distance between the upper-left corner of $R$ and $\ell_1$;
  and let the orthogonal projection of this corner onto $\ell_1$ partition $\ell_1 \cap R$ into segments of lengths $s_1$ and $s_2$.
  Writing $t= \tan{\theta}$, we have
  \[ 1 > |\ell_1 \cap R| = s_1 + s_2 = y (t +t^{-1}), \]
  and thus $y<1/2$ since $t + t^{-1} \geq 2$ for every $t>0$.
  
  Since $\Lambda$ is $\F$-piercing, the distance from $R$ to $\ell_2$, say, $y'$, is at most $y$ (otherwise,
  a translate of $R$ that fits perfectly in between $\ell_1$ and $\ell_2$ is not pierced, a contradiction). In particular, $y'\le y < 1/2 < \mu'$,
  and thus $\ell'$ intersects $R$. We distinguish two cases.

\begin{figure}[ht]
\centering
\includegraphics[scale=0.77]{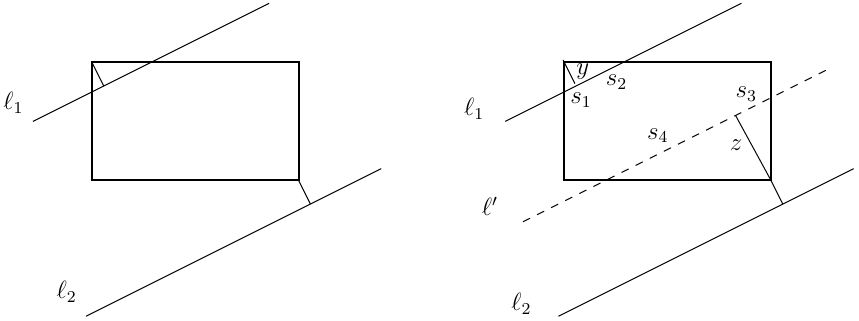}   
\caption{Left: a translate $R$ of $R_i \in \F$ is pierced by $\Lambda$. 
  Right: the same translate is pierced by $\Lambda'$.}
\label{fig:halve}
\end{figure}

\smallskip
\emph{Case 1:} $\ell'$ intersects the right and bottom side of $R$.
Let $z$ denote the distance between the lower-right corner of $R$ and  $\ell'$; and let the orthogonal projection of this corner
onto $\ell'$ partition $\ell' \cap R$ into segments of lengths $s_3$ and $s_4$.
We have
\[z \ge \mu'-y' > 1-y' > \frac{1}{2} \text{ thus } |\ell' \cap R| = s_3 + s_4 = z (t +t^{-1}) > \frac12 \cdot 2 =1 > \lambda'.\]
This means that $R$ is pierced by a point of $\Lambda'$ on $\ell'$, as desired.  

\smallskip
\emph{Case 2:} $\ell'$ intersects the left and top side of $R$.
Let $z$ denote the distance between the upper-left corner of $R$ and  $\ell'$; and let the orthogonal projection of this corner onto $\ell'$
partition $\ell' \cap R$ into segments of lengths $s_3$ and $s_4$.
We have
\[z = \mu'+y > \mu' > 1 \text{ thus } |\ell' \cap R| = s_3 + s_4 = z (t +t^{-1}) > 2 > \lambda'.\]
This again means that $R$ is pierced by a point of $\Lambda'$ on $\ell'$, as desired.  
\end{proof}

In summary, for any finite family $\F$ of axis-parallel rectangles of dimensions at least $1$, any optimal $\F$-piercing lattice $\Lambda$,
after a possible reflection of the $x$-axis, has a canonical vector basis $[u,v]$ satisfying properties P1--P3 given in Section~\ref{sec:intro}.

\section{Decision algorithm} \label{sec:decision}

\begin{proof}[Proof of Theorem~\ref{thm:decision}]
  Let $\F=\{R_1,\dots,R_n\}$ be a family of $n$ axis-parallel rectangles $R_i = w_i \times h_i$.
  Recall that $k_x = \max\{w_i\}$, $k_y = \max\{h_i\}$, and $k = \max\{k_x,k_y\}$.
  Let $[u,v]$ be a canonical vector basis of a lattice $\Lambda$. Fix $m_x=k_x+1$, $m_y=k_y+1$ and $m=\max\{m_x, m_y\}$. (Note that any other choice of $m_x$ and $m_y$ slightly larger than $k_x$ and $k_y$, respectively, works just as well.) 
  For the \emph{board rectangle} $D=[0,m_x] \times [-m_y,m_y]$, let $\Q$ be the set of all maximal empty axis-parallel sub-rectangles of $D$ whose left sides contain the origin $o$.
  The set $\Q$ determines whether $\Lambda$ is $\F$-piercing as follows.

\begin{lemma} \label{lem:iff}
	A lattice $\Lambda$ is $\F$-piercing if and only if for every
	$Q = w \times h \in \Q$ and $R_i = w_i \times h_i \in \F$, we have
	$ w  \leq w_i \text{ or } h \leq h_i.$
\end{lemma}
\begin{proof}
	For the direct implication, we assume that $\Lambda$ is $\F$-piercing; further assume for contradiction that
	$w>w_i$ and $h>h_i$ for some $Q \in \Q$ and $R_i \in \F$. It follows that a translate of $R_i$ can be placed
	strictly inside $Q$. As such, this translate is not pierced by $\Lambda$, a contradiction.
	
	For the converse implication, we assume that some translate of $R_i = w_i \times h_i \in \F$ is not pierced by $\Lambda$. It follows that there exists
	a slightly larger scaled copy $R'_i = w'_i \times h'_i$ of $R_i$ (concentric with the translate of $R_i$) that is not pierced by~$\Lambda$.
	Let $Q' = w' \times h'$ be a maximal empty rectangle containing $R'_i$. Note that either the left side of $Q'$ contains a point $p$ of $\Lambda$, or $Q'$ has infinite width and thus the bottom side of $Q'$ contains a point $p$ of $\Lambda$.
	Since  $\Lambda$ is periodic, we can assume without loss of generality that $p=o$. Consider any $Q = w \times h \in \Q$  that contains the empty rectangle $Q'\cap D$. Observe that $w \ge \min\{w',m_x\}>w_i$ and $h \ge \min \{h', m_y\}>h_i$. In particular, neither $w  \leq w_i$ nor $h \leq h_i$ holds, as desired.
\end{proof}

\begin{lemma} \label{lem:funnel}
  There are $O(k)$ rectangles in $\Q$ and they can be found in $O(k)$ time.
\end{lemma}
\begin{proof}
  Let $P$ be the set of lattice points $p$ in the interior of $D$ or in the interior of its left side such that the rectangle with diagonal $op$ is empty and contains no other lattice points on its boundary.
  Let $P= P^+ \cup P^-$ be the partition of $P$ induced by the $x$-axis
  ($P^+$ and $P^-$ are the points of $P$ with non-negative and non-positive $x$-coordinates, respectively.)
  Order $P^+$ and $P^-$ by $x$-coordinate and observe that:
  (i)~$P^+$ must form a sequence of points with decreasing $y$-coordinates; and 
  (ii)~$P^-$ must form a sequence of points with increasing $y$-coordinates.
  We refer to $P= P^+ \cup P^-$ as the \emph{funnel}, where $P^+$ is the
  \emph{upper funnel} and $P^-$ is the \emph{lower funnel}.

\vspace{2mm}

\emph{Generating the funnel.}
Recall that $u$ lies in the first quadrant and $v$ in the fourth quadrant.
For a fixed $j \in \ZZ$, let $\ell_j$ be the lattice line consisting of the lattice points $i \cdot u + j \cdot v$,
with $i \in \ZZ$. Since $u \in \ell_0$ yields the shortest interpoint distance in $\Lambda$, it is the only point on $\ell_0$ that can belong to $P^+$. We include $u$ in  $P^+$ if $u$ lies in the interior of $D$ or in the interior of its left side.
Then we process the $u$-lines $\ell_j$ one by one: lines with $j<0$ are processed in  decreasing order of $j$;
lines with $j>0$ are processed in  increasing order of $j$. 

When processing the line $\ell_j$, $j<0$, we compute the point with smallest non-negative $x$-coordinate and include it
in $P^+$ if its $x$-coordinate is less than the $x$-coordinates of the points already processed and if it lies in the interior of $D$ or in the interior of its left side.
(Indeed, the new point is always higher than already processed ones since the minimum $i$ such that $i\cdot a+j\cdot c\ge 0$ is monotone as a function of $j$.)
These points of $P^+$ form an upward \emph{branch} approaching the positive part of the $y$-axis.
We stop generating the branch once any of these occur:
(i)~the $u$-line is ``too high'' (it no longer intersects the board) or
(ii)~we get a point on the $y$-axis.

Similarly, we add to $P^+$ points that form a branch approaching the positive part of
the $x$-axis; these points are obtained by processing the lines $\ell_j$, $j>0$.
The points in $P^+$ (the union of two branches) are naturally sorted by their $x$- and $y$-coordinates,
as they form the upper funnel. Refer to \cref{fig:funnel}. 

\begin{figure}[ht]
\centering
\includegraphics[scale=0.57]{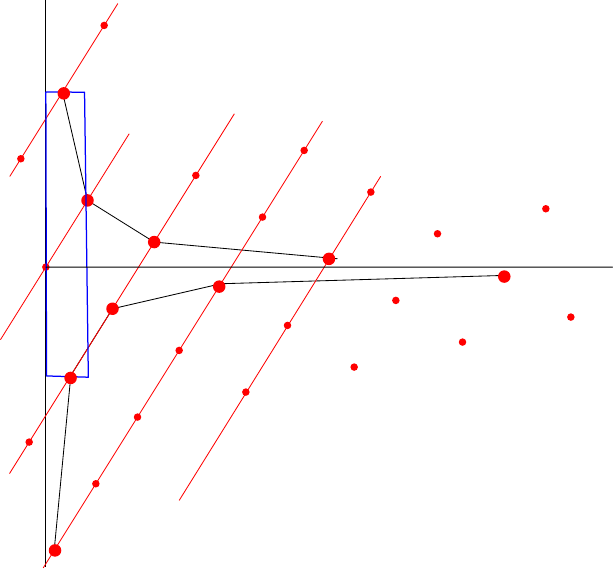}
\caption{A maximal empty rectangle supported from the left by the origin amidst the points of the funnel.}
\label{fig:funnel}
\end{figure}
 
The computation of the lower funnel is analogous to the computation of the upper funnel. 
As before, we construct two branches, this time approaching the positive part of the $x$-axis and the negative
part of the $y$ axis. The points in $P^-$ are likewise naturally sorted by their $x$- and $y$-coordinates,
as they form the lower funnel.

Note that if a lattice point $i \cdot u + j \cdot v$ is in $D$, then
by the law of sines, one can show that
\begin{equation} \label{eq:ij}
	|i|+|j| \leq \frac{\sqrt{m_x^2+m_y^2}}{\lambda \sin 22.5^\circ} \le  \frac{2\,\sqrt{m_x^2+m_y^2}}{\sin 22.5^\circ} < 5.23\,\sqrt{m_x^2+m_y^2} < 7.4\, m = 7.4(k+1).
\end{equation}
Therefore, there are only $O(k)$ lines to be processed, each in $O(1)$ time.
Thus $P$ is of size $O(k)$ and can be computed in $O(k)$ time.

\vspace{2mm}

\emph{Generating the set $\Q$.}
Observe that every pair of consecutive points from $P^+$ with non-zero $x$-coordinates determines an empty rectangle in $\Q$:
it is bounded by the highest of them from above, the rightmost of them from the right, the origin from the left, and continues down
until it hits either some point from $P^-$ with non-zero coordinates or the bottom side of $D$. 
Note that the widths of these rectangles increase and their heights decrease as we process the points of $P^+$ from left to right.
Every pair of consecutive points from $P^-$ with non-zero $x$-coordinates determines an empty rectangle in $\Q$ in a similar fashion.

There are at most 4 other rectangles in $\Q$ which we informally describe as ``the widest'' and ``the highest'' ones. First, we give a precise definition of ``the highest'' rectangles. Let $Q_y^+$ be the rectangle bounded by the $y$-axis from the left, the top side of $D$ from above, the point of $P^+$ with the smallest non-zero $x$-coordinate (if there are no points with non-zero $x$-coordinates in $P^+$, then by the right side of $D$) from the right, and continues down until it hits either some point from $P^-$ with non-zero coordinates or the bottom side of $D$.
We define $Q_y^-$ in a similar fashion after switching the roles of $P^+$ and $P^-$. 

We claim that $Q_y^+$ is empty. We only show that it contains no lattice points from the first quadrant in the interior, since the proof for the fourth quadrant is analogous. If $P^+$ does not contain a point on the $y$-axis,
then $Q_y^+$ is empty by definition of the funnel. Otherwise, if $[0,y_0]\in P^+$ but $Q_y^+$ is not empty,
consider the point $[x_1,y_1]$ of $\Lambda$ with the smallest non-negative $y$-coordinate in the interior of $Q_y^+$.
On the one hand, we have $y_1>y_0$ by definition of the funnel. On the other hand, since $\Lambda$ is periodic,
we have $[x_1,y_1-y_0] \in \Lambda$, which contradicts the choice of $[x_1,y_1]$. Thus $Q_x^+$ is empty, as claimed, and so $Q_y^+ \in \Q$. A similar argument shows that $Q_y^- \in \Q$ as well.

If there is a point of the funnel on the $x$-axis, then there are 2 more (``the widest'') rectangles in $\Q$, each of width $m_x$. First, the rectangle $Q_x^+$ bounded by the $y$-axis from the left, the $x$-axis from below, the right side of $D$ from the right, and the point of $P^+$ with the smallest non-zero $y$-coordinate (if there are no points with non-zero $y$-coordinates in $P^+$, then by the top side of $D$) from above. Second, the rectangle $Q_x^-$ defined in the same fashion for the fourth quadrant. Otherwise, if there are no points of the funnel on the $x$-axis, then there is a unique ``the widest'' rectangles $Q_x$ in $\Q$ of width $m_x$: it is bounded by the $y$-axis from the left, the right side of $D$ from the right, the point of $P^+$ with the largest non-zero $x$-coordinate (if there are no points with non-zero $x$-coordinates in $P^+$, then by the top side of $D$) from above, and the point of $P^-$ with the largest non-zero $x$-coordinate (if there are no points with non-zero $x$-coordinates in $P^-$, then by the bottom side of $D$) from below. We omit the proof that these rectangles are maximal empty, i.e., belong to $\Q$, since it is very similar to the corresponding proof for $Q_y^+$ above.

To illustrate the above procedure, observe that if $u=[0,1], v=[1,0]$, then $P^+=\{[0,1], [1,0]\}$, $P^-=\{[0,-1], [1,0]\}$,
and the only rectangles in $\Q$ are $Q_x^+=[0,m_x]\times[0,1]$, $Q_x^-=[0,m_x]\times[-1,0]$ and $Q_y^+=Q_y^-=[0,1]\times[-m_y,m_y]$.

Note that the above procedure always generates all rectangles in $\Q$.
Besides, it finds $O(k)$ rectangles in $O(k)$ time, as desired.
\end{proof}

\emph{Decision algorithm and its complexity.}
The algorithm consists of two phases.
First, all $O(k)$ rectangles in $\Q$ are found in $O(k)$ time as in \Cref{lem:funnel}.
Then, we test the conditions of Lemma~\ref{lem:iff} for these rectangles to determine whether the lattice $\Lambda$ is $\F$-piercing.

Specifically, we consider rectangles in $\F$ one by one. Given $R_i = w_i \times h_i \in \F$, let $Q_j =  w_j' \times h_j'$ be a rectangle in $\Q$
with the smallest width such that $w_j'>w_i$. Note that to test the conditions of \Cref{lem:iff} for $P_i$ and all rectangles in $\Q$,
it is sufficient to check if $h_j'\le h_i$.

Observe that $w_j'$ is monotone as a function of $w_i$. Moreover, we can assume without loss of generality that both sets $\F$ and $\Q$
are sorted in increasing order of width. (Indeed, the set $\F$ is given while the set $\Q$ is generated in this order during the first phase.)
Consequently, to find the value of $w_j'$ in the $i$th step, there is no need to examine the rectangles $Q \in \Q$ that were too narrow
in previous steps. More formally, we can  increment the index $j$ until the inequality $w_j'>w_i$ holds for the first time.
Since both indices $i$ and $j$ cannot decrease during this phase, its running time is $O(|\Q|+|\F|) = O(k+n)$.

Note that the time taken by both phases is $O(k+n)$. This completes the proof of Theorem~\ref{thm:decision}.
\end{proof}

\section{Computing an optimal $\F$-piercing lattice}  \label{sec:alg}

In this section we prove Theorem~\ref{thm:alg}.
Given a set $\F$ of $n$ axis-parallel rectangles, we obtain a pseudo-polynomial time algorithm for finding an optimal $\F$-piercing lattice.
Its running time is polynomial in the number of rectangles and the maximum dimension of the rectangles in $\F$.

\subsection{Preliminaries}

By properties P1--P3, the area of a fundamental parallelogram of $\Lambda$ is $A=a\cdot d+b\cdot c>0$.
As such, $A$ is an increasing function in each of the four variables. 

A lattice $\Lambda$ is said to be $x$-\emph{tight} with respect to $\F$ and its basis $[u,v]$ if for some
$i_s,j_s,i_t,j_t \in \ZZ$ satisfying~\eqref{eq:ij} and some $w_s, w_t$ each of which is either $0$ or $m_x$ (recall that $m_x=k_x+1$, where $k_x$ is the width of the  widest rectangle in $\F$) or a width of a rectangle in $\F$, we have
\begin{align*}
i_s \cdot a + j_s \cdot c &= w_s,\\
i_t \cdot a + j_t  \cdot c &= w_t,
\end{align*} 
where these constraints are linearly independent, \ie, $i_s \cdot j_t- j_s \cdot i_t \neq 0$.

A lattice $\Lambda$ is said to be $y$-\emph{tight} with respect to $\F$ and its basis $[u,v]$ if
for some
$i_s,j_s,i_t,j_t \in \ZZ$ satisfying~\eqref{eq:ij} and some $h_s, h_t$ each of which is either $0$ or $m_y$ (recall that $m_y=k_y+1$, where $k_y$ is the height of the  highest rectangle in $\F$) or a height of a rectangle in $\F$, we have
\begin{align*}
	i_s \cdot b + j_s \cdot d &= h_s,\\
	i_t \cdot b + j_t  \cdot d &= h_t,
\end{align*} 
where these constraints are linearly independent.

A lattice $\Lambda$ is said to be \emph{tight} with respect to $\F$ and its basis $[u,v]$
if $\Lambda$ is both $x$-tight and $y$-tight with respect to $\F$ and $[u,v]$. 
The key to the algorithm is the following.

\begin{lemma} \label{lem:tight}
  There exists an optimal $\F$-piercing lattice that is tight with respect to $\F$ and a canonical basis.
\end{lemma}
\begin{proof}
  Let $\Lambda$ be an optimal $\F$-piercing lattice of density $1/A$ and $[u,v]$ be its canonical basis. 
  Consider the set $\Q$ of rectangles constructed in \Cref{lem:funnel}. By \Cref{lem:iff},
  for every $Q=w\times h \in \Q$ and $R_i = w_i \times h_i \in \F$, either $w \leq w_i$ or $h \leq h_i$ holds.

  Each satisfied constraint of the form $h \leq h_i$ can be expressed as a linear inequality in $b$ and $d$.
  Indeed, if $Q$ is not supported by lattice points on the top or the bottom side, then $h\ge m_y$,
  and so $h \leq h_i$ cannot hold.
  Thus we can assume that $Q$ is supported by lattice points on both the top and bottom sides.
  Since the $y$-coordinates of these points can be expressed as integer combinations of $b$ and $d$, then so is $h$.
  Moreover, the coefficients of this expression satisfy~\eqref{eq:ij} because the difference between the rightmost and the leftmost
  of these two lattice points is also a point of $\Lambda$ in the board rectangle $D$.
  Let $\I_y$ be a finite set of all these inequalities in $b$ and $d$.

Suppose now that we continuously change the values of $b$ and $d$ such that all the inequalities in $\I_y$ are satisfied
at any given moment. Naturally, the sets $\Q$ and $P$ constructed in \Cref{lem:funnel} also change continuously. Then the
decision algorithm from \Cref{sec:decision} implies that the lattice remains $\F$-piercing during the process unless the
`structure' of the funnel $P$ changes, which, in turn, can only happen if (i) a new lattice point reaches the $x$-axis or one of the horizontal sides
of $D$; or (ii) a lattice point on the $x$-axis or one of the horizontal sides of $D$ leaves this axis or side. 

Note that the structure of $P$ can change as well when $y$-coordinates of two lattice points in $D$ become equal or, vise versa, two lattice points in $D$ on the same horizontal line move with different speeds. However, since $\Lambda$ is periodic, these events imply the two listed above. 

Each non-zero lattice point $i \cdot u - j \cdot v$ in $D$ on the $x$-axis corresponds to the equality $i \cdot b + j \cdot d = 0$.
Moreover, each lattice point $i \cdot u - j \cdot v$ with the $x$-coordinate in the interval $[0,m_x]$
that lies above (resp., below) the $x$-axis corresponds to the strict inequality $i \cdot b + j \cdot d > 0$ (resp.,
$i \cdot b + j \cdot d < 0$). Define similar expressions corresponding to the relative positions of lattice points and the
horizontal sides of $D$. Let $\I_y'$ be the set of all these expressions. Observe that if we continuously change the
values of $b$ and $d$ such that all the expressions in $\I_y \cup \I_y'$ remain satisfied, then the structure of the set
$P$ does not change and the lattice remains $\F$-piercing. Moreover, the right hand side of each of the expressions in
$\I_y \cup \I_y'$ is either $0$ or $m_y$ or a height of a rectangle in $\F$.
 
We claim that there is at least one equality in $\I_y \cup \I_y'$. Indeed, otherwise we could slightly increase both $b$
and $d$ such that all expressions in $\I_y \cup \I_y'$ remain satisfied. The new lattice would still be $\F$-piercing
and have area of the fundamental parallelogram larger than $A$, and thus $\Lambda$ would not be optimal. This
contradiction proves the claim. If there is another equality in  $\I_y \cup \I_y'$ linearly independent with the first
one, then $\Lambda$ is $y$-tight with respect to $\F$ and $[u,v]$. 

Similarly, define the sets $\I_x$ and $\I_x'$ of linear expressions in $a$ and $c$ such that the lattice remains
$\F$-piercing as we continuously change the values of $a$ and $c$ provided that all the expressions in $\I_x \cup \I_x'$
remain satisfied. As before, there is at least one equality in $\I_x \cup \I_x'$, and if there is another equality in
$\I_x \cup \I_x'$ linearly independent with it, then $\Lambda$ is $x$-tight with respect to $\F$ and $[u,v]$. 

We claim that $\Lambda$ is either $x$-tight or $y$-tight with respect to $\F$ and $[u,v]$. Indeed, assume for
contradiction that $i_t \cdot a + j_t \cdot c = w_t$ and $i_s \cdot b + j_s \cdot d = h_s$ are the only (up to scaling)
equalities in $\I_x \cup \I_x'$ and $\I_y \cup \I_y'$, respectively. Take 
\begin{equation*}
	a(\tau) = a+j_t\cdot \tau, \ \ c(\tau)=c-i_t \cdot \tau, \ \ b(\sigma) = b+j_s\cdot \sigma, \ \ d(\sigma)=d-i_s \cdot \sigma
\end{equation*}
and  consider the area of the fundamental parallelogram as a bilinear function of $\sigma, \tau$:
\begin{equation*}
	A(\sigma, \tau) = a(\tau)\cdot d(\sigma)+b(\sigma)\cdot c(\tau) = r_{s,t}\cdot \sigma \cdot \tau + r_{s}\cdot \sigma + r_{t}\cdot \tau + r, \mbox{ where}
\end{equation*}
\begin{equation*}
	r_{s,t} = -(i_s\cdot j_t + j_s \cdot i_t), \ \  r_s = j_s \cdot c - i_s \cdot a, \ \  r_t = j_t\cdot d - i_t \cdot b, \ \  r=a\cdot d+b\cdot c = A.	
\end{equation*}
On the one hand, it is not hard to see that if at least one of the coefficients $r_{s,t}, r_{s}, r_{t}$ is not zero,
then we can pick sufficiently small $\sigma$ and $\tau$ such that $A(\sigma,\tau)>A$ and all the expressions in $\I_x
\cup \I_x'$ and $\I_y \cup \I_y'$ remain satisfied. The new lattice is still $\F$-piercing, and thus $\Lambda$ is not
optimal, a contradiction. On the other hand, if $r_s=0$, then the vectors $[a,c]$ and $[j_s,i_s]$ are collinear. If, in
addition, $r_{s,t}=r_t=0$, then $[j_s,i_s]$ is collinear with $[j_t,-i_t]$, which, in turn, is collinear with $[b,
  -d]$. Since all four of these vectors are non-zero, we conclude that $[a,c]$ and $[b, -d]$ are collinear, and thus
$A=a\cdot d+b\cdot c =0$, a contradiction again. This completes the proof of the claim. 

Assume without loss of generality that $\Lambda$ is $y$-tight with respect to $\F$ and $[u,v]$. If $\Lambda$ is
$x$-tight with respect to $\F$ and $[u,v]$ as well, then we are done. Hence, we can also assume that $i_t \cdot a + j_t
\cdot c = w_t$ is the only (up to scaling) equality in $\I_x \cup \I_x'$. Take $a(\tau) = a+j_t\cdot \tau$,
$c(\tau)=c-i_t \cdot \tau$ and consider the area $A(\tau) = a(\tau)\cdot d+b\cdot c(\tau)$ of the fundamental
parallelogram as a linear function of $\tau$. As earlier, if the coefficient of $A(\tau)$ in front of $\tau$ were
non-zero, then $\Lambda$ would not be optimal. Thus $A(\tau)$ is a constant as a function of $\tau$.  

Now we continuously increase the value of $\tau$ from $0$ to the point $\tau_x$ where one of the strict inequalities in
$\I_x \cup \I_x'$ turns into equality for the first time; this gives us the desired second equality. Since the lattice
remains $\F$-piercing for all $0\le \tau < \tau_x$ and the set of $\F$-piercing lattices is compact, we conclude that
the new lattice is $\F$-piercing even at $\tau=\tau_x$ (even though the structure of the set $\Q$ can change at this
moment). 
 
Let $u(\tau_x) = [a(\tau_x), b]$, $v(\tau_x) = [c(\tau_x), -d]$ and $\Lambda'= \Lambda(\tau_x)$ be the optimal
$\F$-piercing lattice with the basis $[u(\tau_x), v(\tau_x)]$. Let $u'=[a',b']$, $v'=[c',-d']$ be such that $[u',v']$ is
a canonical vector basis of $\Lambda'$ constructed as in \Cref{sec:lattice}. Note that $[u',v']$ can differ from
$[u(\tau_x), v(\tau_x)]$ since the latter basis could lose each of the properties P1--P3 during the continuous
deformation process described above. To complete the proof, it remains to show that $\Lambda'$ is tight with respect to
$\F$ and $[u',v']$. 

Observe that each equality in $\I_x \cup \I_x'$ of the form $i_t \cdot a(\tau_x) + j_t \cdot c(\tau_x) = w_t$
corresponds to the equality $i_t' \cdot a' + j_t' \cdot c'=w_t$ in the new coordinates, where $i_t', j_t'$ are the
unique integers such that $i_t \cdot u(\tau_x) + j_t \cdot v(\tau_x) = i_t' \cdot u' + j_t' \cdot v'$. Since the point
$i_t \cdot u(\tau_x) + j_t \cdot v(\tau_x)$ of $\Lambda'$ is in $D$ by construction, the new coefficients $i_t', j_t'$
satisfy~\eqref{eq:ij} as well regardless of whether the construction of $[u',v']$ required a reflection of the $x$-axis
or not. (In fact, they satisfy a stronger inequality: Since the $x$-coordinate of $i_t' \cdot u' + j_t' \cdot v'$ equals
$w_t$ while the absolute value of its $y$-coordinate is at most $m_y$, the law of sines implies that $|i_t'|+|j_t'|\le
\frac{2\,\sqrt{w_t^2+m_y^2}}{\sin 22.5^\circ}$.) Therefore, $\Lambda'$ is $x$-tight with respect to $\F$ and
$[u',v']$. A similar argument shows that $\Lambda'$ is also $y$-tight with respect to $\F$ and $[u',v']$, as desired. 
\end{proof}  

We remark that not every optimal $\F$-piercing lattice $\Lambda$ is tight with respect to $\F$ and its canonical basis
$[u,v]$. For instance, if $\F$ consists of just one $1\times 1$ rectangle, $u=[1,0], v=[\tau,-1]$, then $\Lambda$ is
optimal $\F$-piercing and $y$-tight with respect to $\F$ and $[u,v]$, for all $\tau \in \RR$. However, $\Lambda$ is not
$x$-tight with respect to $\F$ and $[u,v]$ if, say, $\tau$ is irrational. 

\begin{figure}[ht]
\centering
\includegraphics[scale=0.7]{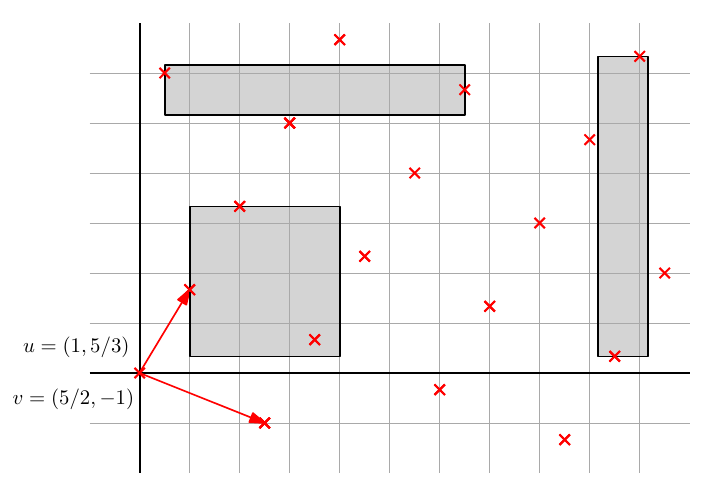}
\caption{This lattice pierces all translates of rectangles in $\F_0=\{6 \times 1, 1 \times 6, 3 \times 3\}$.}
\label{fig:lattice-F0}
\end{figure}

Lemma~\ref{lem:tight} immediately yields the following.

\begin{corollary} \label{cor:rational}
Given a family $\F=\{R_1,\dots,R_n\}$ consisting of $n$ axis-parallel rectangles with rational side-lengths,
there is an optimal $\F$-piercing lattice with a rational vector basis.   
\end{corollary}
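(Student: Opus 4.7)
The plan is to apply Lemma~\ref{lem:tight} directly. Given a family $\F$ with rational side-lengths, there exists an optimal $\F$-piercing lattice $\Lambda$ that is $\F$-tight. By the definition of tightness, there are rectangles $R_s, R_t \in \F$ and integers $i_s, j_s, i_t, j_t$ so that
\begin{equation*}
\begin{pmatrix} i_s & j_s \\ i_t & j_t \end{pmatrix}
\begin{pmatrix} a \\ c \end{pmatrix}
=
\begin{pmatrix} w_s \\ w_t \end{pmatrix},
\end{equation*}
with the coefficient matrix having nonzero determinant (by linear independence). Since this determinant is a nonzero integer and the right-hand side is rational, Cramer's rule yields rational values of $a$ and $c$. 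By the $y$-tightness condition, the analogous $2 \times 2$ integer linear system in $(b, d)$ with rational right-hand side $(h_s, h_t)$ likewise has a rational solution. Hence $u = [a, b]$ and $v = [c, -d]$ form a rational vector basis of an optimal $\F$-piercing lattice.

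No obstacle is expected here — the corollary is essentially a one-line consequence of Lemma~\ref{lem:tight} once one observes that a linearly independent integer $2 \times 2$ system with rational right-hand side has a rational solution. The only thing worth mentioning explicitly is that $i_s, j_s, i_t, j_t$ and $i'_s, j'_s, i'_t, j'_t$ are integers (they come from expressing side lengths of maximal empty supported rectangles as integer combinations of $a, c$ and $b, d$ respectively), which is precisely what guarantees that the two determinants are nonzero integers rather than merely nonzero reals.
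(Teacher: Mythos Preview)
Your proposal is correct and matches the paper's approach exactly: the paper simply states that Lemma~\ref{lem:tight} ``immediately yields'' the corollary, and your argument spells out precisely the intended one-line reasoning via Cramer's rule applied to the two independent integer $2\times 2$ systems with rational right-hand sides.
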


\paragraph{Example 1.}
Let $\F_0=\{6 \times 1, 1 \times 6, 3 \times 3\}$. For the optimal $\F_0$-piercing lattice in~\cref{fig:lattice-F0},
we have $A=31/6$ and this corresponds to the pair of linear systems below.  
Their solutions are $a=1$, $c=5/2$, and $b=5/3$, $d=1$.
The above solution satisfies the inequality $b+d \leq 3$, relevant for piercing $3 \times 3$ rectangles.

\[
\begin{aligned} [t] 
\begin{cases}  
a =1 \\
a +2c =6
\end{cases}
\end{aligned}
\hspace{3cm}
\begin{aligned}[t]  
\begin{cases} 
d =1 \\
3b+d=6
\end{cases}  
\end{aligned}
\]

\paragraph{Example 2.}
A slightly suboptimal lattice for $\F_0=\{6 \times 1, 1 \times 6, 3 \times 3\}$
is given by $u=[4/5,7/4]$, $v=[13/5,-3/4]$ and has $A=103/20$. 
The corresponding linear equations are: $a+2c=6$, $c-2a=1$, $b-d=1$, $3b+d=6$.

\subsection{The algorithm}  \label{subsec:alg}

Denote by $A_{\opt}$ the area of the fundamental parallelogram of an optimal $\F$-piercing lattice. By \Cref{obs:min-area},
we have $1 \leq A_{\opt} \leq k$. 
Recall that an $\F$-piercing lattice with density $1/A'$ can be computed in $O(n)$ time~\cite{DT24}, where $A'$ satisfies
(for convenience we replaced the $1.895$-ratio from Theorem~\ref{thm:approx} by~$2$): 
\begin{equation} \label{eq:1.895}
  \frac{A_{\opt}}{2} \leq A'  \leq A_{\opt}.
\end{equation}

By Lemma~\ref{lem:tight}, there exists an optimal $\F$-piercing lattice that is tight with respect to $\F$ and a canonical basis.
The algorithm guesses an optimal lattice by exhaustive enumeration of all canonical bases $[u,v]$ such that the corresponding 
lattice is tight with respect to $\F$ and $[u,v]$ and have density in a prescribed interval.
It retains the bases corresponding to $\F$-piercing lattices and returns one with optimal density (\ie, largest area $A$). 

We have already seen, cf.~\eqref{eq:ij}, that there are $O(k)$ choices for $i,j$  and $n+2$ choices
of the right hand side for each of the four equations, so $O(k^8n^4)$ possible systems overall.
A basis is immediately discarded if $A<A'$, according to~\eqref{eq:1.895}.
For each basis satisfying suitable conditions (as imposed by P1--P3), the decision algorithm
from Section~\ref{sec:decision} determines if the corresponding lattice is $\F$-piercing.
Its running time is $O(k + n)$, and so the overall running time is
\[ O\left( k^8 n^4 (k+n) \right).  \]

We also make a couple of remarks leading to a constant factor improvement of the running time. First, observe that the coefficients $i,j$ from the proof of \Cref{lem:tight} correspond to the points of the funnel, and thus they are coprime and at least one of them is positive. Second, the distance $\mu$ between consecutive $u$-lines of an $\F$-piercing lattice cannot be too big since otherwise a translate of a rectangle in $\F$ can be placed entirely between these lines. Together with a lower bound on $A_{\opt}$, this leads to a better lower bound on $\lambda$, which, in turn, improves the range in~\eqref{eq:ij}. Finally, we remark that if we exclude the values $0$ and $m_x$ (resp., $0$ and $m_y$) from the list of possible right hand sides in the definition of an $x$-tight (resp. $y$-tight) lattice, then the statement of \Cref{lem:tight} remains valid. However, the proof of this strengthening is relatively more involved since the structure of $P$ (and thus the systems $\I_x$ and $\I_y$) can change during the continuous modification of the lattice, but only finitely many times. We do not go into the details for the sake of clarity.

\section{Separation results}  \label{sec:gap}

Theorem~\ref{thm:eps-gap}, which appears as Theorem~4 in~\cite{DT22x}, shows a preliminary separation result.
Applying our algorithm on suitable input families yields a much sharper separation,
as specified in Theorem~\ref{thm:gap-F0-F1} and Corollary~\ref{cor:gap}.
The translative and lattice piercing densities are denoted by $\den$ and $\denL$, respectively. 

\paragraph{The family $\F_0$.}
Consider the $3$-rectangle family $\F_0=\{6 \times 1, 1 \times 6, 3 \times 3\}$.
Note that $\den(\F_0)=1/6$. 
The upper bound $\denL(\F_0) \leq \frac{6}{31}$ is implied by the lattice with basis
$u=[1,5/3]$, $v=[5/2,-1]$, shown in~\cref{fig:lattice-F0}.
The program output shows that if $\Lambda$ has optimal density $\denL(\F_0)$ then $A=\frac{31}{6}$
and so  $\denL(\F_0)=1/A=6/31$; that is, there are no $\F_0$-piercing lattices with $A>31/6$. 
There are two optimal $\F_0$-piercing lattices with density $6/31$,
see Table~\ref{tab:lattices} and Fig.~\ref{fig:output1}.

\begin{table}[htbp]
\centering
{\small
  \begin{tabular}{ccccc} \toprule
$a$ & $b$ & $c$ & $d$ &  Area ($A$) \\ \midrule
$1/1$ & $5/3$ & $5/2$ & $1/1$ & $31/6=5.166\ldots$ \\ 
$5/3$ & $1/1$ & $8/3$ & $3/2$ & $31/6=5.166\ldots$ \\  
    \midrule
$1/1$ & $1/1$ & $1/1$ & $4/1$ & $5/1=5.00$ \\   
$1/1$ & $2/1$ & $1/1$ & $3/1$ & $5/1=5.00$ \\ 
\bottomrule
  \end{tabular}
}
\caption{First two rows: optimal $\F_0$-piercing lattices with density $6/31$.
Last two rows: optimal $\F_1$-piercing lattices with density $1/5$.}
\label{tab:lattices}
\end{table}

\begin{figure}[ht]
\centering
\includegraphics[scale=0.5]{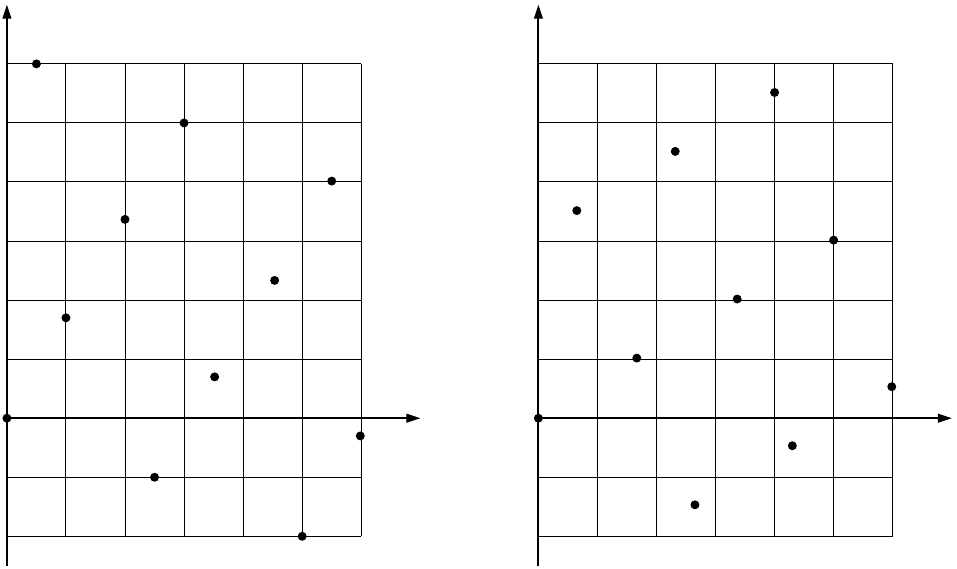}
\caption{The two optimal $\F_0$-piercing lattices of density $6/31$:
  Left: $u=[1,5/3]$, $v=[5/2,-1]$. 
  Right: $u=[5/3,1]$, $v=[8/3,-3/2]$.
  The 2nd lattice is a reflection of the 1st one about the line $y=x$.
  Since $\F_0$ is preserved under this reflection,
  the two lattices are in some sense equivalent.}
\label{fig:output1}
\end{figure}

\paragraph{The family $\F_1$.}
Consider the (extended) family $\F_1=\{6 \times 1, 1 \times 6, 3 \times 3, 4 \times 2, 2 \times 4\}$;
note that $\F_1 \supset \F_0$. It is easily verified by inspection that the periodic piercing set in~\cref{fig:F0}
is also a valid piercing set for $\F_1$; it is repeated here for convenience in Fig.~\ref{fig:non-lattice-F1}.
Note that $\den(\F_0)= \den(\F_1)=1/6$. 
  
\begin{figure}[ht]
\centering
\includegraphics[scale=1]{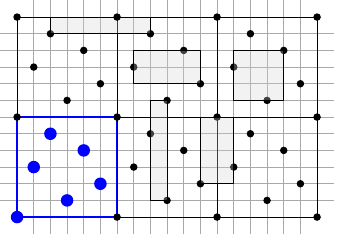}
\caption{This non-lattice periodic grid point-set pierces all translates
  of $\F_1=\{6 \times 1, 1 \times 6, 3 \times 3, 4 \times 2, 2 \times 4\}$.}
\label{fig:non-lattice-F1}
\end{figure}

None of the two optimal $\F_0$-piercing lattices pierces both rectangles in $\F_1 \setminus \F_0$;
indeed, the first lattice does not pierce the $2 \times 4$ rectangle, and the second lattice
does not pierce the $4 \times 2$ rectangle.
Since these are the only two optimal $\F_0$-piercing lattices,
we conclude that the separation gap implied by $\F_1$ is even larger.
The program finds that the optimal piercing density is $1/5$ (\ie, $1/A$ where $A=5$);
There are two optimal $\F_1$-piercing lattices with density $1/5$,
see Table~\ref{tab:lattices} and Fig.~\ref{fig:output2}, and no $\F_1$-piercing lattices with $A>5$.

\begin{figure}[ht]
\centering
\includegraphics[scale=0.5]{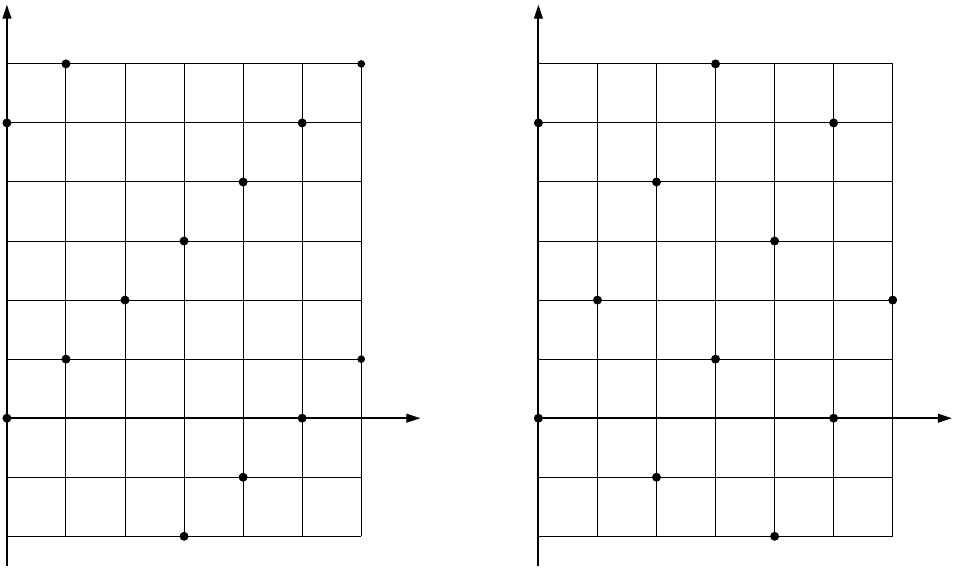}
\caption{The two optimal $\F_1$-piercing lattices of density $1/5$:
  Left: $u=[1,1]$, $v=[1,-4]$. 
  Right: $u=[1,2]$, $v=[1,-3]$.}
\label{fig:output2}
\end{figure}

\section{Conclusion}  \label{sec:conclusion}

We list several directions to be explored.

\smallskip

\begin{enumerate}
  \itemsep 1pt

\item Given a family $\F$ of $n$ axis-parallel rectangles,
  what is the computational complexity of determining the translative piercing density $\den(\F)$ 
and the  lattice piercing density $\denL(\F)$ of $\F$?
Is there a polynomial-time algorithm (independent of $k$)  for any of these problems?

\item What is the largest possible relative gap between the optimal lattice and non-lattice piercing densities
  for a family of axis-parallel rectangles? A lower bound of $1.2$ and an upper bound
  of $1.895$ on this gap are known. 

\item Is it true that $\den(\F)=\denL(\F)$ for families $\F$ of two rectangles?

\item How does the gap between the optimal lattice and non-lattice piercing densities
  for families of axis-parallel boxes grow with the dimension of the space? 
  
\item The algorithm for finding an optimal piercing lattice appears to be extendable to
  piercing axis-parallel boxes in higher dimensions. This is left as an open problem.

\end{enumerate}

\bibliographystyle{vancouver}

\end{document}